\newtheorem{thm}{Theorem}
\theoremstyle{definition}
\newtheorem{rem}{Remark}
\theoremstyle{plain}
\theoremstyle{remark}
\renewcommand{\P}{\ensuremath{\mathbb{P}}}
\newcommand{\e}{e}
\newcommand{\argmax}{\mathop{\mathrm{arg~max}}}
\DeclareMathOperator{\E}{\mathbb{E}}
\newcounter{tmp-counter}
\journal{Performance Evaluation}
\begin{document}

\begin{frontmatter}



\title{Packet Skipping and Network Coding for Delay-Sensitive Network Communication}

\author[label1]{Marc Aoun}
\author[label2]{Paul Beekhuizen}
\author[label3]{Antonios Argyriou}
\author[label1]{Dee Denteneer}
\author[label1]{Peter van der Stok}

 \address[label1]{Philips Research, Eindhoven, 5656AE, The Netherlands}
 \address[label2]{Eurandom, TU/e, Eindhoven, 5656AE, The Netherlands}
 \address[label3]{Department of Computer and Communication Engineering, University of Thessaly, Volos, 38221, Greece}

\begin{abstract}
We provide an analytical study of the impact of packet skipping and opportunistic network coding on the timely communication of messages through a single network element. In a first step, we consider a single-server queueing system with Poisson arrivals, exponential service times, and a single buffer position. Packets arriving at a network node have a fixed deadline before which they should reach the destination. To preserve server capacity, we introduce a thresholding policy, based on remaining time until deadline expiration, to decide whether to serve a packet or skip its service. The obtained goodput improvement of the system is derived, as well as the operating conditions under which thresholding can enhance performance. Subsequently, we focus our analysis on a system that supports network coding instead of thresholding. We characterize the impact of network coding at a router node on the delivery of packets associated with deadlines. We model the router node as a queueing system where packets arrive from two independent Poisson flows and undergo opportunistic coding operations. We obtain an exact expression for the goodput of the system and study the achievable gain. Finally, we provide an analytical model that considers both network coding and packet skipping, capturing their joint performance. A comparative analysis between the aforementioned approaches is provided.

\end{abstract}

\begin{keyword}
Queueing model \sep network coding \sep packet skipping \sep real-time traffic \sep packet deadlines


\end{keyword}

\end{frontmatter}


\section{Introduction}
One way to interpret real-time communication is the assignment of particular importance to information flows in the time domain. In this case, the efficacy of data transfer is not measured in terms of raw communication throughput but it becomes dependent on the time instant the information is delivered. The increased interest in real-time communication comes at a time when the Internet becomes a carrier of more time-sensitive information and a new generation of systems such as wireless sensor networks are deployed for the monitoring of time-critical physical environments. High data rate applications like video-on-demand and high quality interactive video communications also require timeliness guarantees.

In this paper, we investigate the communication of real-time messages through a queueing theory -based analysis. We aim at modeling the achievable timeliness benefits when packet skipping and network coding are employed at a router node. We initially model a network node, where data packets have to be serviced (transmitted) within a certain deadline, as a single-server queue. The queueing system is assumed to have an overwrite-buffer with space for one packet; when a new packet bearing more recent data arrives to the queueing system, the buffered packet, if any, is discarded and the new packet is stored in the buffer. Including the buffer space at the server, the system represents an M/M/1/2 queue with deadlines.

The model we consider provides significant insights into the impact of local decisions on the timeliness performance of simple networks. In particular, we show that the real-time goodput of the queueing system is improved when the system employs a policy where buffered packets are only served if the remaining time until their deadline (i.e. their lead-time) is larger than a certain threshold $\theta$. The intuition behind this policy is that server capacity is not wasted on packets that have a low probability of being served before their deadline expires. By skipping service of the currently stored packet, and waiting for a new packet arrival instead, goodput benefits are obtained because a newly arriving packet has a higher probability of meeting its deadline.

Under the assumption that the service times of the single-server system are exponentially distributed, we derive an exact analytic expression for the goodput of the system with this thresholding policy. The exact goodput expression allows us to compute the optimal threshold, which gives insight into the maximum performance gain, as well as the parameter ranges for which the threshold policy provides effective performance benefits. 
In the second part of the paper, the M/M/1/2 queueing system is extended with the ability to algebraically code packets; with network coding, two data packets are transformed into one by a simple algebraic XOR-operation. The coded packet is transmitted and the original native packets can be retrieved at their respective destination nodes through a similar XOR-operation~\cite{ahlswede:network-coding}. The idea of network coding has gained a lot of interest because of its potential to radically affect the way networks operate. When packets with timeliness restrictions undergo network coding, key QoS parameters might be improved but also compromised. Therefore, additional insight into the performance of systems with coding capability in the presence of timeliness requirements is needed. More specifically, it is critical to understand what is the impact of coding on packet delays and the interplay with the system goodput. In this part of the paper, we model a router node to which packets from two flows arrive and must be forwarded. We provide a detailed derivation and exact expressions for the stationary real-time goodput of the router node in the presence of network coding. The analytical results obtained in this paper can be related qualitatively to our recent simulation-based study of the network coding and thresholding policy~\cite{aoun-ewsn2011}. Later, we will discuss aspects of these simulation results that can be related to this paper.

The rest of this paper is organized as follows. Section~\ref{section:deadlines} provides background information on service models and the different types of deadlines that can be associated with real-time information. Related work is presented in Section~\ref{section:related-works}. The system model adopted in this paper is presented in Section~\ref{section:system-model}. The analysis for packet skipping based on lead-time thresholding is described in Section~\ref{section:analysis-thresholding}. Subsequently, the goodput analysis for a system with network-coded communication is developed in Section~\ref{section:performanceAnalysisNetworkCoding}. Section~\ref{section:joint} looks at the joint modeling of skipping and network coding. Finally, Section~\ref{section:conclusions} presents our conclusions and possible directions for future work.

\section{Service Models and Types of Deadlines}\label{section:deadlines}
In queueing systems where packets are associated with deadlines, a service must usually either \emph{begin} or \emph{end} before a specific point in time. In the first case, services are usually non-preemptive (i.e., a service that has begun must be completed), while in the latter case, services are usually preemptive (i.e., a service that has begun can be aborted). In our case, the deadlines are to the end of service, but services are non-preemptive; indeed, the information contained in a packet needs to be completely transmitted before the receiver node can decode the packet and use the information, thus the end-of-service nature of the deadlines. Additionally, serving a packet in the context of wireless network communication consists of possibly contending for wireless channel access, and subsequent transmission of the packet bits. Due to the layered communication architecture adopted in wireless networks, little control can be exerted on the channel contention mechanism once it is activated. The same holds for the physical transmission once it is started.

Kruk et al.~\cite{kruk_edf_heavy_traffic_networks} furthermore consider three types of deadlines: Hard, firm, and soft. With hard deadlines, a single packet that is not served before its deadline makes all other packets useless as well. With this type of deadlines all packets must thus be served before their deadlines, which requires a worst-case analysis. With firm deadlines, every packet that is not served before its deadline is useless, whereas with soft deadlines lateness is permitted and packets can still be used afterwards. We consider the case of firm deadlines, a natural choice that arises when looking at back-end applications requirements in systems such as sensor networks, where conclusions should be based on non-expired data but deadline misses can be tolerated.

\section{Related Work} \label{section:related-works}
Barrer~\cite{Barrer_indifferent_clerks,Barrer_ordered_service} was one of the first to analyze a queue with deadlines. He studied an $M/M/1$ queue with deterministic deadlines. This analysis was extended to systems with state-dependent arrival and service rates, and more general deadlines by Brandt and Brandt~\cite{Brandt_MMs,Brandt_MMs_deadlines} and by Movaghar~\cite{movaghar_beginnings_fifo}. These studies all deal with non-preemptive systems with deadlines to service beginnings, but in~\cite{movaghar_end_fifo}, Movaghar considers deadlines to service endings with preemption. Movaghar and Kargahi~\cite{Kargahi_movaghar,Kargahi_movaghar_edf} have devised an approximation for an $M/M/1$ queue with the earliest-deadline-first (EDF) discipline, which is known to stochastically maximise the fraction of packets served before their deadline (see, e.g., \cite{panwar_towsley_wolf}).

Lehoczky~\cite{Lehoczky_mm1} analyzed an $M/M/1$ queue with deadlines to service endings and the EDF-policy in heavy traffic. He argued that, since the deadlines of all stored packets have to be taken into account, this queue gives rise to a Markov process on a statespace of infinite dimension. Lehoczky shows that the Markov process collapses to a tractable one-dimensional process in heavy traffic. Lehozcky later used these results to analyze control policies in~\cite{Lehoczky_control}, and extended his analysis to Jackson networks in~\cite{Lehoczky_jackson}. Doytchinov et al.~\cite{doytchinov_edf_heavy_traffic} extended this analysis to a $GI/G/1$ queue with deadlines, and Kruk et al.~\cite{kruk_edf_heavy_traffic_networks} to networks of such queues.

Delay sensitive traffic in the presence of network coding was studied by Parag et al. in~\cite{parag08}. The authors adopted a statistical QoS measure that expresses the decay rate of the buffer at the middle node in a butterfly network. This router node is the bottleneck in the butterfly case, which explains the reasons why the authors focus on its buffer behavior. Although this metric is enough for approximating the delivered QoS per flow, it does not express it directly in terms of the achieved packet error rate and goodput. Shah et al.~\cite{shah07} start with the goal of minimizing the backlog of coded packets at receiving nodes. They design an online algorithm so that the linear packet combinations that are generated, are chosen in such a way that their actual span excludes any linear combination that is already known to the receivers.

Eryilmaz et al.~\cite{eryilmaz06} studied the delay benefits of network coding in wireless multicast and multiple unicast scenarios. They presented a model that considers only a single-hop transmission and the random coding across packets from the same flow (intra-session network coding). Online network coding and delay minimization was more recently presented in~\cite{barros09a}. A precise model is not presented in that work, although a simple delay analysis for the wireless channel with Bernoulli erasures is performed. Another interesting modeling work can be found in~\cite{wu09} where the authors used stochastic network calculus for calculating the throughput in a coded butterfly network. Nevertheless, the model does not cover delay and timeliness aspects. A recent work from Goseling et al.~\cite{goseling09} aims at modeling the performance of coded queueing systems. Instead of focusing on providing exact expressions, the authors provide bounds on the performance of a coded tandem queuing network with two independent flows. Nevertheless, the extension of this work to packets with deadline requirements is not straightforward.

\section{System Model} \label{section:system-model}
For the analysis of packet skipping, we consider the single server queueing system depicted in Fig.~\ref{fig: model_sr}.
Packets arrive according to a Poisson process with parameter $\lambda$ and have exponentially distributed service times with parameter $\mu$.
The system has an `overwrite' buffer with one buffer position (not counting the packet in service, if there is one). If a packet arrives when there is already a packet in the buffer, the latter is overwritten (i.e., replaced) by the newly arriving packet. The result is thus a modified version of an $M/M/1/2$ queueing system (note that, in an $M/M/1/K$ system, the $K$ refers to the number of buffer positions, including the one in service).

The same $M/M/1/K$ system is adopted for the analysis of network coding; Two input processes with parameters $\lambda_1$ and $\lambda_2$ are considered, as depicted in Fig.~\ref{fig: model_nc}. The overall arrival process here is also a Poisson process with parameter $\lambda = \lambda_1 + \lambda_2$.
In the network coding case, when a packet is waiting in the buffer, a second packet arriving to the system is algebraically coded with the waiting packet instead of overwriting it, in case they belong to different flows. Overwriting takes place when waiting packet and arriving packet belong to same flow, or when the buffer is already occupied by a coded packet. With network coding the coded packet is transmitted and the original native packets can be retrieved at their respective destination nodes through a similar algebraic XOR operation~\cite{ahlswede:network-coding}. We provide two scenarios where the above is possible. In the case of multicast transmission if each node has one of the two desired packets, they can both recover the other packet after the reception of the single XOR-coded packet~\cite{ahlswede:network-coding}. Also a similar situation arises in the wireless bidirectional relay network where both nodes want to transmit to each other a packet they already have through an intermediate relay node~\cite{aoun-ewsn2011}. In this case similarly, if the two nodes have packets $a$ and $b$, with the transmission of a single coded packet $a\oplus b$ from the relay they could both decode the opposite packet. The broadcast nature of the wireless medium allows this operation.

A packet service is considered successful if it is \emph{completed} within $d$ time units after the arrival of the packet to the queue, i.e., if a packet arrives at time~$t$, its service must end before its global deadline $t+d$, with $d$ being a fixed relative deadline. Services are non-preemptive, which means that if a packet starts service, its service has to be completed, before service on another packet can be started. We define the goodput of the system as the number of packets that are served successfully, i.e. within the deadline requirement, per unit of time.

In the packet skipping case, we introduce a threshold $\theta$ such that the packet in the buffer may only start service if its lead-time, i.e. the remaining time until its absolute deadline, is larger than $\theta$. If the time until the deadline of the packet becomes smaller than $\theta$, the packet is removed from the buffer. The idea behind this threshold is that if the deadline of a packet will be exceeded shortly, there is a large probability that the packet is not served in time. Thus, the goodput of the system might be improved by waiting for another packet that has a higher probability of being served successfully.

A packet whose deadline has expired is not served, both in the system that employs packet skipping and in the system that employs network coding. A coded packet is considered to be expired if the deadlines of both its underlying native packets are expired. This is implemented by setting the expiration deadline of a coded packet to that of the largest among the absolute deadlines of the underlying native packets that were coded together. Note that the individual deadlines of the native packets are still the ones considered when calculating the goodput. Therefore, a coded packet that completes service before its expiration deadline does not necessarily mean that the underlying native packet with the smallest deadline has been served on-time. Also a packet that arrives to an empty system is always served. Indeed, with fixed deadlines, for packet skipping, there is no gain in \emph{not} serving packets arriving to an empty system and waiting for the next packet. After all, the next packet will have the same deadline, and hence the same probability of being served successfully. Delaying the service of a packet arriving to an empty system might be proved to provide additional benefits in the network coding case, but was not considered in this work due to the additional analysis complexity it would engender.

\begin{figure}[t]
\centering
\begin{tikzpicture}[scale=0.8]
  \draw (0,0) -- (1.5,0) -- (1.5,1) -- (0,1);
  \draw (2.1,0.5) circle(0.5cm);
  \draw[->] (-1,0.5) -- node[above]{$\lambda$} (0,0.5);
  \draw[->] (2.6,0.5) -- node[above]{$\mu$}(3.6,0.5);
  \path[fill] (0.75,0.25) rectangle +(0.5,0.5);
  \path[fill] (1.85,0.25) rectangle +(0.5,0.5);
\end{tikzpicture}
\caption{A schematic representation of the model for a node adopting packet skipping.}
\label{fig: model_sr}
\end{figure}
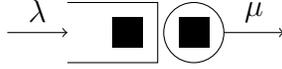

\begin{figure}[t]
\centering
\begin{tikzpicture}[scale=0.8]
  \draw (0,0) -- (1.5,0) -- (1.5,1) -- (0,1);
  \draw (2.1,0.5) circle(0.5cm);
  \draw[->] (-1,0.7) -- node[above]{$\lambda_1$} (0,0.7);
  \draw[->] (-1,0.2) -- node[below]{$\lambda_2$} (0,0.2);
  \draw[->] (2.6,0.5) -- node[above]{$\mu$}(3.6,0.5);
  \path[fill] (0.75,0.25) rectangle +(0.5,0.5);
  \path[fill] (1.85,0.25) rectangle +(0.5,0.5);
\end{tikzpicture}
\caption{A schematic representation of the model with the router exercising network coding.}
\label{fig: model_nc}
\end{figure}
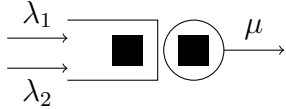

\section{Performance Analysis of Packet Skipping} \label{section:analysis-thresholding}

\subsection{Goodput Analysis}\label{section:renegingGoodputAnalysis}

In this section, we compute the goodput of the system in the presence of packet skipping. It will be convenient to view the goodput as a function of the threshold $\theta$, and we denote it by $\gamma(\theta)$. Furthermore, we denote by $L$ the number of packets present in the system (including the server) immediately before a packet arrival.

A packet arriving to the system will find the latter in one of the following three states:
\begin{itemize}
	\item \textit{Empty (L = 0)}: The server is available, and there is no packet occupying the waiting buffer.
	\item \textit{Busy Server (L = 1)}: The server is servicing a packet, but the waiting buffer is free.
	\item \textit{Full, (L = 2)}: the server is busy and the waiting buffer is occupied. In that case, the arriving packet overwrites the buffered one.
\end{itemize}

\begin{thm} \label{theoremForReneging}
  The goodput $\gamma(\theta)$ is given by
  \begin{equation}\label{eq: gamma}
   \begin{split}
    \gamma(\theta) =& \lambda\: \P(L=0)(1-\e^{-\mu d}) \\
      +&\lambda\:(1-\P(L=0))
            \Big[\frac{\mu}{\lambda+\mu}\left(1-\e^{-(\lambda+\mu)(d-\theta)}\right) \\
            -& \frac{\mu \e^{-\mu d}}{\lambda}\left(1-\e^{-\lambda(d-\theta)}\right)
            \Big],
              \end{split}
  \end{equation}
  where
  \begin{equation}\label{eq: P(L=0)}
    \P(L=0) = 1 - \frac{\lambda^2 + \lambda \mu}{\mu^2 + \lambda \mu \e^{-(\lambda+\mu)(d-\theta)} + \lambda^2 + \lambda \mu}
  \end{equation}
  is the probability that an arbitrary packet arrives at an empty system.
\end{thm}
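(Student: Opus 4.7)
The plan is to evaluate $\gamma(\theta)$ by PASTA and by a case split on the state $L$ that the tagged arrival observes. Since Poisson arrivals see time averages, the arrival-distribution of $L$ coincides with its stationary distribution, so
\begin{equation*}
\gamma(\theta) = \lambda\bigl[\P(L=0)(1-e^{-\mu d}) + (1-\P(L=0))\,\P(\text{success}\mid L\ge 1)\bigr],
\end{equation*}
and the proof reduces to two tasks: the conditional success probability for a tagged arrival that finds the server busy, and the stationary probability $\P(L=0)$.

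For the conditional success probability I would note that $\{L=1\}$ and $\{L=2\}$ are indistinguishable from the tagged arrival's viewpoint: in both cases the tagged packet ends up alone in the buffer (overwriting the incumbent if $L=2$), while by memorylessness the server has an $\text{Exp}(\mu)$-distributed remaining service time $S$. Let $A\sim\text{Exp}(\lambda)$ be the time until the next subsequent arrival and $X\sim\text{Exp}(\mu)$ the service the tagged packet would later receive, with $S,A,X$ independent. The tagged packet is served in time iff $S<A$ (not overwritten), $S\le d-\theta$ (not skipped), and $X\le d-S$ (deadline met); integrating over the density of $S$ yields
\begin{equation*}
\P(\text{success}\mid L\ge 1) = \int_0^{d-\theta}\mu\, e^{-(\lambda+\mu)s}\bigl(1-e^{-\mu(d-s)}\bigr)\,ds,
\end{equation*}
which splits into exactly the two terms appearing in the bracket of~\eqref{eq: gamma}. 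The $L=0$ branch is immediate: the packet enters service on arrival and succeeds with probability $1-e^{-\mu d}$.

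The real work is computing $\P(L=0)$, because the occupancy process on $\{0,1,2\}$ is not a CTMC: the exit from state $2$ is governed by the earliest of an $\text{Exp}(\mu)$ service completion, an $\text{Exp}(\lambda)$ arrival (which merely resets the buffered packet's age to zero), and the deterministic age-deadline at $d-\theta$. I would therefore treat the process as semi-Markov. The embedded jump chain has transitions $0\to 1$ always; $1\to 0$ with probability $\mu/(\lambda+\mu)$ and $1\to 2$ with probability $\lambda/(\lambda+\mu)$; $2\to 1$ always. Solving its balance equations gives $(p_0,p_1,p_2) = (\mu,\,\lambda+\mu,\,\lambda)/(2(\lambda+\mu))$. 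The mean holding times $h_0=1/\lambda$ and $h_1=1/(\lambda+\mu)$ are immediate, while $h_2$ follows from a short renewal equation obtained by conditioning on which of the three competing events in state $2$ occurs first and using memorylessness when that event is an arrival:
\begin{equation*}
h_2 = \frac{1-e^{-(\lambda+\mu)(d-\theta)}}{\lambda+\mu} + \frac{\lambda}{\lambda+\mu}\bigl(1-e^{-(\lambda+\mu)(d-\theta)}\bigr)\,h_2,
\end{equation*}
so $h_2 = (1-e^{-(\lambda+\mu)(d-\theta)})/\bigl(\mu + \lambda e^{-(\lambda+\mu)(d-\theta)}\bigr)$.

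Finally, $\P(L=0) = p_0 h_0/\sum_i p_i h_i$, and putting everything over a common denominator recovers \eqref{eq: P(L=0)}; substitution into the decomposition above gives~\eqref{eq: gamma}. I expect the main obstacle to be the algebraic collapse of $\sum_i p_i h_i$ into the compact form of~\eqref{eq: P(L=0)}, rather than any conceptual difficulty, since everything else reduces to memoryless bookkeeping.
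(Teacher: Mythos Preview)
Your argument is correct. The success-probability half---the PASTA split, the observation that $\{L=1\}$ and $\{L=2\}$ collapse for the tagged arrival because of overwriting plus memorylessness, and the resulting integral $\int_0^{d-\theta}\mu e^{-(\lambda+\mu)s}(1-e^{-\mu(d-s)})\,ds$---is exactly the paper's derivation.

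For $\P(L=0)$ you take a somewhat different route. The paper argues via a regenerative cycle: $\P(L=0)$ is mean idle time over mean cycle time, and the mean busy period is obtained from a first-step recursion $\E[BP]=1/\mu+(\lambda/\mu)\E[T]$, where $\E[T]$ (the mean buffer-clearance period) is precisely your $h_2$ and satisfies the same renewal equation you write. Your semi-Markov packaging---embedded-chain stationary distribution times mean holding times---is equivalent: after canceling the common factor $1/(2(\lambda+\mu))$, your $p_0 h_0/\sum_i p_i h_i$ reduces to $\mu/(\mu+\lambda+\lambda^2 h_2)$, which is the paper's $(1/\lambda)\big/(1/\lambda+\E[BP])$ once you substitute $\E[BP]=1/\mu+(\lambda/\mu)h_2$. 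So both approaches hinge on the same nontrivial step, the renewal equation for $h_2=\E[T]$; the rest is bookkeeping. Your framework is more systematic and would scale more mechanically to richer state spaces (indeed the paper effectively moves in that direction in its network-coding section), while the paper's busy-period decomposition is slightly more elementary in that it avoids setting up and solving an embedded chain. The ``algebraic collapse'' you flag as the main obstacle is in fact immediate once you notice the common $1/(2(\lambda+\mu))$ in all three $p_ih_i$ terms.
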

\begin{proof}
  The goodput $\gamma(\theta)$ is by definition equal to the arrival rate times the probability that the service of an arbitrary packet is successful. To compute this probability, we condition on whether a packet arrives at an empty system or not. Denoting by $B$ the service time that the arriving packet would experience, we have:
  \begin{equation}\label{eq: gamma in P(L=0)}
  \begin{split}
    \gamma(\theta) =& \lambda\:\P(L=0) \P(B \leq d)\\
              +& \lambda\:(1-\P(L=0))\int\limits_0^{d-\theta} \mu \e^{-\mu t} \P(B\leq d-t) \e^{-\lambda t}  dt
  \end{split}
  \end{equation}

  Evaluating the probabilities and the integral indeed yields~\eqref{eq: gamma}.

The rationale behind Eq.~\eqref{eq: gamma in P(L=0)} is as follows: If an arbitrary packet~$P$ arrives at an \emph{empty} system, its service is successful if its service time $B$ is less than $d$. This explains the term $\P(L=0) \P(B \leq d)$. If $P$ arrives at a \emph{non-empty} system ($L = 1$ or $L = 2$), we condition on the length of the remaining (residual) service time~$t$ of the packet in service. The residual service time is exponentially distributed, so its density is $\mu \e^{-\mu t}$. Moreover, given that the residual service time is $t$, the service of~$P$ is successful if the following three conditions are all met: First, the time until the deadline is larger than $\theta$ when the residual service ends, i.e., $d-t\geq \theta$. This is taken into account in the integration region of Eq.~(\ref{eq: gamma in P(L=0)}). Second, the service time is less than $t-d$, which explains the factor $\P(B\leq t-d)$. Third, there were no other arrivals during $t$ time units, which explains the term $\e^{-\lambda t}$.\\

In order to determine $\P(L=0)$, we introduce the concept of cycle time, which we define as the time between two consecutive moments at which the system becomes empty. The cycle time~$C$ consists of two parts: A part during which the system is empty, the `idle time', and a part during which the system serves packets until it becomes empty again, the `busy period'. An illustration of a cycle is presented in Fig.~\ref{fig:cycle}.

The busy period spreads throughout the time that the system is not in a state $L = 0$. It consists of a collection of consecutive states characterized by $L = 1$ or $L = 2$. We shall refer to the case where $L = 2$ during a busy period as \textit{clearance period} ($CP$), which is the amount of time the buffer is continuously occupied. A clearance period ends when the buffer is cleared, i.e. when either a dispatch to the server occurs, or when the stored packet is denied service and removed from the system following the thresholding procedure.

\begin{figure}[t]
\begin{center}
  \includegraphics[scale=0.45]{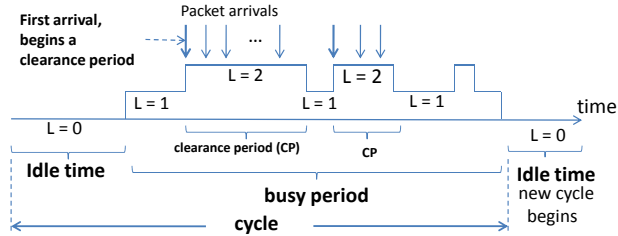}
  \caption{A cycle consisting of idle time and busy period.}
  \label{fig:cycle}
\end{center}
\end{figure}

\smallskip
  The idle time lasts until a packet arrives, so the mean idle time is $1/\lambda$. Denoting the busy period by $BP$, we have:
  \begin{equation*}
    \E[C] = \frac{1}{\lambda} + \E[BP].
  \end{equation*}

  The PASTA-property (Poisson Arrivals See Time Averages, see~\cite{wolff_poisson_1982}) implies that the probability that a packet arrives at an empty system, $P(L=0)$, is given by the probability that the system is empty at an arbitrary time. A standard argument from renewal theory implies that the latter probability is given by the mean idle time divided by the mean cycle time:
  \begin{equation}\label{eq: P(L=0) in E[BP]}
    \P(L=0) = \frac{1/\lambda}{\E[C]} = \frac{1/\lambda}{1/\lambda + \E[BP]}.
  \end{equation}
  The mean busy period is given by the following equation:
  \begin{equation}\label{eq: E[BP] in E[BP]}
    \E[BP] = \frac{1}{\lambda+\mu}+\frac{\mu}{\lambda+\mu}\cdot 0+ \frac{\lambda}{\lambda+\mu}\left(\E[T] + \E[BP]\right),
  \end{equation}
  where $\E[T]$ is the mean buffer clearance period. We will derive $\E[T]$ next.

  Eq.~\eqref{eq: E[BP] in E[BP]} follows from a probabilistic argument: The busy period starts when a packet arrives to an empty system. After this, the expected time until the first event (an arrival or service completion) is $1/(\lambda+\mu)$. Furthermore, with probability $\mu/(\lambda+\mu)$, the first event is a service completion, and the busy period ends after 0 additional time units. With probability $\lambda/(\lambda+\mu)$, the first event is an arrival. In this case, the arriving packet is stored in the buffer and the buffer has to be cleared, which takes in expectation an additional $\E[T]$ time units. Once the buffer has been cleared, the expected time before the system becomes empty is again equal to $\E[BP]$ due to the memorylessness of service times. Evaluating~\eqref{eq: E[BP] in E[BP]} yields
  \begin{equation}\label{eq: E[BP] in E[T]}
    \E[BP] = \frac{1}{\mu} + \frac{\lambda}{\mu} \E[T].
  \end{equation}

  It thus remains to determine $\E[T]$, the expected length of the buffer clearance period. Without loss of generality, we assume that the buffer clearance period starts at time~0. We condition on $t$, the time at which the next event (arrival or service completion) occurs. Because the time until the next event is exponentially distributed with parameter $\lambda + \mu$, we obtain:
  \begin{eqnarray*}
    \E[T] &=& \int\limits_{d-\theta}^{\infty} (\lambda+\mu)\e^{-(\lambda+\mu)t}(d-\theta) dt \\
    &+& \int\limits_0^{d-\theta} \frac{\mu}{\lambda+\mu}(\lambda+\mu)\e^{-(\lambda+\mu)t} t dt \\
    &+&\int\limits_0^{d-\theta} \frac{\lambda}{\lambda+\mu}(\lambda+\mu)\e^{-(\lambda+\mu)t}(t+\E[T])dt.
  \end{eqnarray*}
  The three integrals (from left to right) cover the following three possibilities: First, if there are no events before time $d-\theta$, the buffer is cleared at time $d-\theta$ because the deadline of the buffered packet becomes smaller than $\theta$. Second, if the first event occurs at time $t < d-\theta$, and the first event is a service completion, the buffer is cleared at time~$t$ because the buffered packet enters service. Third, if the first event occurs at time $t<d-\theta$ and is a packet arrival, the arriving packet overwrites the buffered packet, and a new buffer clearance period begins.

  After rewriting and evaluating the integrals (the rightmost two using partial integration), we obtain:
  \begin{equation}\label{eq: E[T]}
    \E[T] = \frac{1-\e^{-(\lambda+\mu)(d-\theta)}}{\mu + \lambda \e^{-(\lambda+\mu)(d-\theta)}}.
  \end{equation}
  Combining Eqs.~\eqref{eq: P(L=0) in E[BP]}, \eqref{eq: E[BP] in E[T]}, and~\eqref{eq: E[T]} yields Eq.~\eqref{eq: P(L=0)}.
\end{proof}

\begin{rem}
If $d\to \infty$, deadlines become irrelevant and two special cases occur for specific values of $\theta$. If $\theta=0$, packets in the buffer are always served, regardless of the time until their deadline. The goodput of the system is thus equal to that in an $M/M/1/2$ queue, namely $\lambda(1-\frac{1}{1+\rho+\rho^2})$, with $\rho = \lambda/\mu$ (see e.g., \cite[Section~5.7]{wolff_intro_to_queueing_theory} or~ \cite[Section~3.6]{Kleinrock_queueing_systems_vol1}). If $\theta=d \to \infty$, packets in the buffer are never served. Only packets arriving to an empty system are. In this case, the goodput is equal to that in an $M/M/1/1$ queue, namely $\lambda(1-\frac{1}{1+\rho})$. By substituting $d$ and $\theta$, these values follow from Eq.~\eqref{eq: gamma} as well.
\end{rem}

\begin{figure}[t]
\centering
\includegraphics[keepaspectratio,width = 0.9\linewidth]{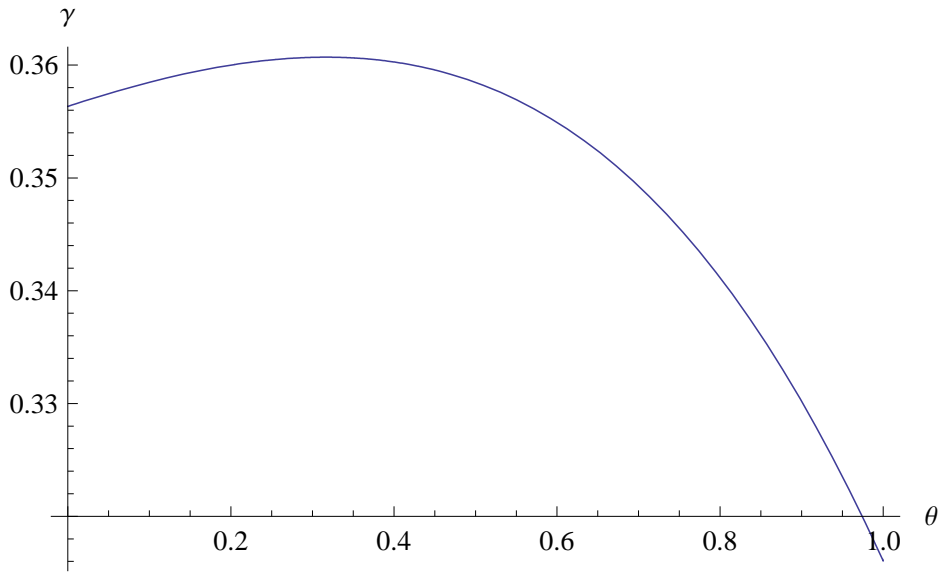}
\caption{The goodput as a function of $\theta$, with $\mu = \lambda = d = 1$.}
\label{fig: throughput mu=lambda=d=1}
\end{figure}

\subsection{Numerical results} \label{section:numerical-results-threshold}

In this section, we study the goodput numerically. In Fig.~\ref{fig: throughput mu=lambda=d=1}, we present $\gamma(\theta)$ for various values of $\theta$, with $\mu = \lambda = d = 1$. We clearly see that the goodput of the system is indeed increased by the threshold~$\theta$, as long as $\theta$ is chosen appropriately.

Having established that an appropriately chosen threshold increases goodput, we study \emph{how much} the goodput can be increased by this threshold. Note that $\gamma(0)$ is the goodput of the system without threshold. We define $\theta^*$ as the threshold that maximizes the relative increase in goodput, i.e., we define
\begin{equation*}
   \theta^* = \argmax\left\{\frac{\gamma(\theta)-\gamma(0)}{\gamma(0)} : 0\leq \theta \leq d\right\}.
\end{equation*}
We define the maximal goodput gain as the maximal relative increase in goodput, i.e., as ($\gamma(\theta^*)-\gamma(0))/\gamma(0)$.

In the sequel, we fix $\mu=1$. We can make this assumption without loss of generality; all parameters are relative to each other, so for any set of parameters, we can scale time in such a way that $\mu=1$ without changing the goodput of the system.

In Fig.~\ref{fig: maximal gain up to 5} we display the maximal goodput gain for $d$ between 0 and~2, and $\lambda$ between 0 and~5. Fig~\ref{fig: maximal gain up to 5} implies that, in this parameter region, we can obtain an increase of up to $15\%$ in goodput by setting the threshold to its optimal value. Furthermore, we see that the relative increase in goodput is maximal if $d \approx 0.3$, but even for larger values of $d$ there can be a goodput enhancement. In addition to this, the maximal goodput gain grows as $\lambda$ grows, so the goodput policy is especially beneficial if the system is overloaded.

\begin{figure}[t]
\begin{center}
\includegraphics[scale=0.85]{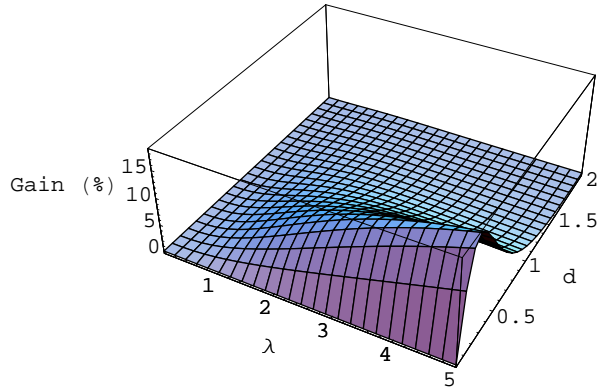}
\caption{The maximal goodput gain for $0\leq \lambda \leq 5$.}
\label{fig: maximal gain up to 5}
\end{center}
\end{figure}


Likewise, in Fig.~\ref{fig: maximal gain up to 10} we show the maximal goodput gain for $\lambda$ up to 10. In this region, the increase in goodput can be as much as $30\%$. Furthermore, the value of $d$ with the largest relative increase in goodput, as well as the range of $d$ for which gain is achieved become smaller. This is conveyed by Fig.~\ref{fig: Goodput different rates}, which plots the gain for three increasing values of $\lambda$, equal to 3, 4 and 6.
The narrower range of $d$ where gain is achieved is explained by a shorter sojourn time of buffered packets that later enter service, due to more frequent buffer overwriting for an increasing arrival rate. The shorter sojourn time reflects itself in a larger lead-time when the buffered packet is assessed, by the packet skipping component, for dispatch to the server. This in turns presents itself in a decrease of the upper value of $d$ for which packet skipping still results in service denials for some packets, and as such, still results in goodput gain.

\begin{figure}[t]
\begin{center}
  \includegraphics[scale=0.9]{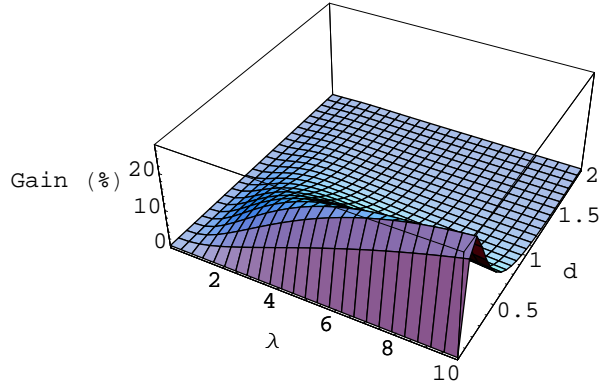}
  \caption{The maximal goodput gain for $0\leq \lambda \leq 10$.}
  \label{fig: maximal gain up to 10}
\end{center}
\end{figure}

\begin{figure}[t]
\begin{center}
  \includegraphics[scale=0.5]{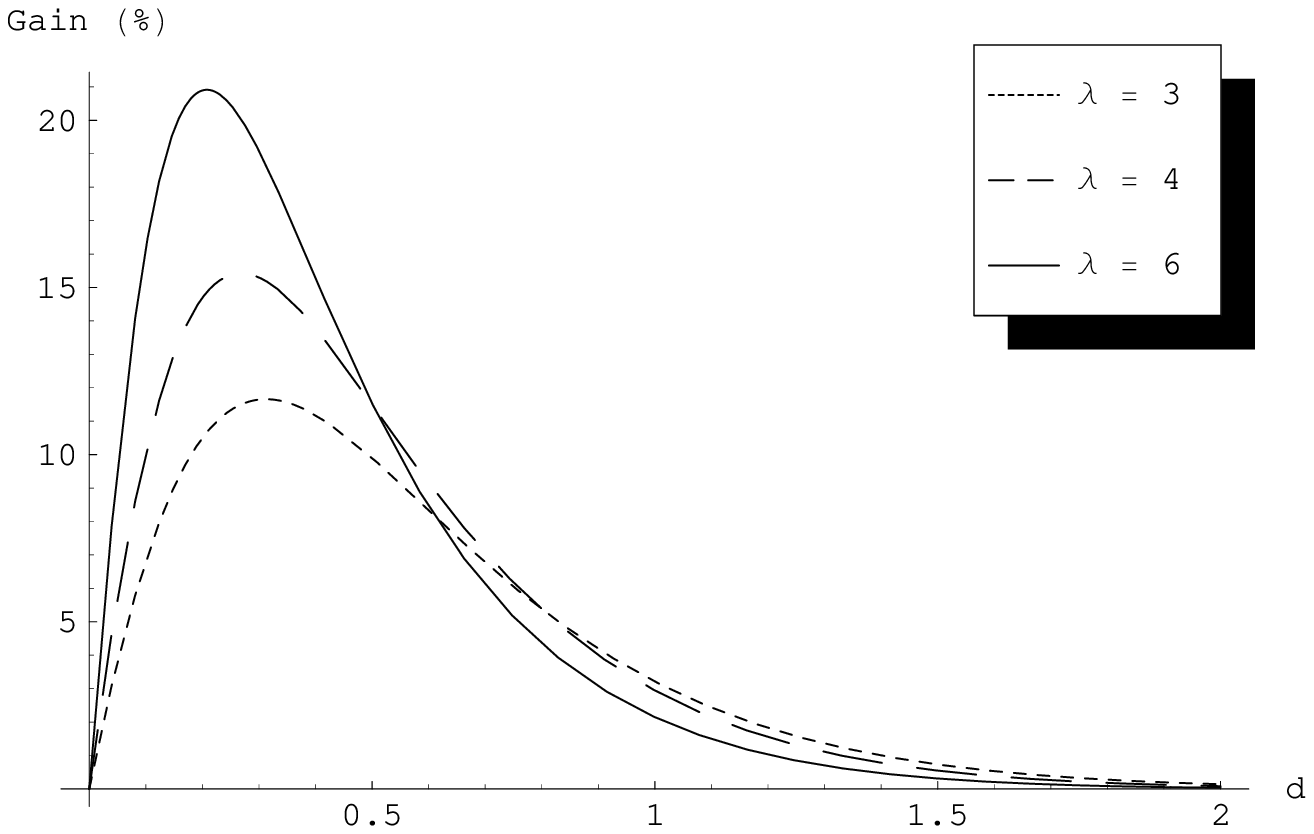}
  \caption{Goodput gain for $\lambda$ values equal to 3, 4 and 6.}
  \label{fig: Goodput different rates}
\end{center}
\end{figure}

Finally, we show values of the gain for specific values of $\lambda$ and $d$ in Table~\ref{tab: maximal gain values}. We clearly see that, for this parameter region, the maximal goodput increase grows as $\lambda$ becomes larger. Furthermore, for a fixed $\lambda$, the maximal gain increases up to a certain value of $d$, and decreases beyond that value.

\begin{table}[t]
\centering
\begin{tabular}{|cc|ccccc|}
\hline
&\multirow{2}{*}{Gain (\%)}
& \multicolumn{5}{c|}{$d$}\\
  && 0.1 & 0.2 & 0.3 & 0.4 & 0.5 \\
  \hline
  \multirow{5}{*}{$\lambda$}
 &0.5 & 0.23 & 0.37 & 0.45 & 0.48 & 0.48 \\
 & 1 & 1.04 & 1.71 & 2.07 & 2.21 & 2.18 \\
 & 2  & 1.036 & 5.90 & 6.91 & 7.05 & 6.64 \\
 & 4 & 10.07 & 14.69 & 15.30 & 13.78 & 11.49 \\
 & 8 & 21.36 & 24.20 & 19.61 & 14.31 & 10.08 \\
 \hline
\end{tabular}
\caption{The maximal goodput gain for specific values of $\lambda$ and $d$.}
\label{tab: maximal gain values}
\end{table}

\section{Performance Analysis of Network Coding} \label{section:performanceAnalysisNetworkCoding}
In this part of the paper we focus our analysis on a router element R, to which two flows of packets arrive and need to be routed. The system model is depicted in Fig.~\ref{fig: model_nc}. The router node is capable of opportunistically coding two native (un-coded) packets belonging to different flows, and transmitting the resulting coded packet instead of two native transmissions. For convenience, we refer to packets of the first flow as type-1 packets, while those of the second flow are type-2 packets. A coded packet is referred to as type-3. The sum of the two arrival rates $\lambda_{1}$ and $\lambda_{2}$ is denoted by $\lambda$.

We shall restrict the analysis to one relative deadline $d$ value for both flows. This restriction neither impacts the quality nor the representativeness of the analysis. It is merely used to simplify the underlying presentation. The extension of our analysis to different deadline distribution is straightforward.

\subsection{Goodput Analysis}
We define the \textit{goodput of a flow} as the number of packets of that flow that are served by the router within their deadline requirement, per unit of time. Given the symmetry of the problem, we will concentrate our analysis on the goodput of type-1 packets. The analysis for packets of type-2 is identical.

The state of the queueing system at the router is fully conveyed through the characterization of the status of its waiting buffer and the status of its server. Two variables will be used to keep track of the system state: $L$, already introduced in Section~\ref{section:renegingGoodputAnalysis}, that defines the current number of packets in the system (including a packet being served), and $N$, that defines the type of the packet residing in the buffer, if any. Knowledge of the type of a packet when it is in service is not required for the analysis.

Variable $L$ can take one of three values: 0, 1, or 2. Note that a coded packet is counted as a single packet in the system. Variable $N$ can take one of three values: 1 (packet in buffer is of type-1), 2 (packet in buffer is of type-2), or 3 (packet in buffer is a coded packet, type-3).

A packet of type-1 arriving to the system can find it in one of five states:
\begin{itemize}
	\item \textit{Empty (L = 0)}: The server is available, and there is no packet occupying the waiting buffer.
	\item \textit{Busy Server (L = 1)}: The server is servicing a packet, but the waiting buffer is free.
	\item \textit{Full, type-1 packet in buffer (L = 2, N = 1)}: the server is busy. The waiting buffer is occupied by a type-1 packet.
	\item \textit{Full, type-2 packet in buffer (L = 2, N = 2)}: the server is busy. The waiting buffer is occupied by a type-2 packet.
	\item \textit{Full, type-3 packet in buffer (L = 2, N = 3)}: the server is busy. The waiting buffer is occupied by a coded packet.
\end{itemize}

\begin{thm} \label{theoremForCoding}
The goodput of type-1 packets is given by the following expression:
\begin{equation}\label{eq:LawTotalProbability}
\gamma_{1} = \lambda_1 \sum_{a,b} \P(L = a, N = b) \P(success|L = a, N = b),
\end{equation}
where $\P(L = a, N = b)$ is the probability that the system is in state $(L = a, N = b)$ and $\P(\mbox{success}|L = a, N = b)$ is the probability that an arbitrary type-1 packet arriving to the system in state (L = a, N = b) meets its deadline.
\end{thm}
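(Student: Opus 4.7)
The plan is to derive Eq.~(\ref{eq:LawTotalProbability}) from first principles by combining the definition of goodput with the law of total probability and PASTA. The theorem is really a structural decomposition that sets up the rest of the section, so the "proof" should be short and should isolate the few probabilistic facts on which the decomposition rests, leaving the actual evaluation of $\P(L=a,N=b)$ and $\P(\text{success}\mid L=a,N=b)$ to the subsections that follow.

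First I would invoke the definition of $\gamma_1$: the long-run number of type-1 packets that depart before their absolute deadline, per unit time. Since type-1 packets arrive as a Poisson process of rate $\lambda_1$, this long-run rate equals $\lambda_1$ times the probability $p_1$ that an arbitrary tagged type-1 arrival is eventually served within $d$ time units of its arrival. This step is essentially Little-type bookkeeping and uses only that success/failure of each arrival is well-defined and that the arrival process has rate $\lambda_1$.

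Next I would compute $p_1$ by conditioning on the state $(L,N)$ that the tagged arrival finds upon entering the system. The pair $(L,N)$ exhausts the state description introduced at the start of the section, since in the non-coded $L\in\{0,1\}$ states the value of $N$ carries no information (the buffer is empty). Writing
\begin{equation*}
p_1 \;=\; \sum_{a,b} \P\bigl(\text{arriving type-1 sees }(L,N)=(a,b)\bigr)\,\P\bigl(\text{success}\mid L=a,N=b\bigr)
\end{equation*}
is just the law of total probability; plugging this into $\gamma_1=\lambda_1 p_1$ yields the claimed formula once we identify the arrival-time state distribution with the time-stationary distribution.

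The one step that actually carries content is this last identification, and it is where I would be most careful. The identification follows from PASTA applied to the type-1 arrival stream: type-1 arrivals are Poisson, and the Lack of Anticipation Assumption holds because the increments of the type-1 Poisson process after any time $t$ are independent of the history of $(L,N)$ up to $t$ (the state dynamics depend only on past arrivals of both flows and on service completions, both of which are independent of future type-1 arrivals). Hence the empirical distribution of states seen by type-1 arrivals coincides with the stationary distribution $\P(L=a,N=b)$, and Eq.~(\ref{eq:LawTotalProbability}) follows. No harder obstacle arises at this stage; the real analytic effort, which I would flag as the substantive work to come, is the explicit computation of the five stationary probabilities and the associated conditional success probabilities in the subsequent subsections.
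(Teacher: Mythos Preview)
Your proposal is correct and follows essentially the same route as the paper: define goodput as $\lambda_1$ times the success probability of a tagged type-1 arrival, condition on the state seen at arrival via the law of total probability, and invoke PASTA to replace the arrival-time state distribution by the stationary one. Your explicit check of the Lack of Anticipation Assumption for the type-1 substream is a welcome refinement over the paper's terser appeal to PASTA, but the structure of the argument is the same.
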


\begin{proof}
Follows a similar reasoning as the proof of Theorem~\ref{theoremForReneging}; The goodput $\gamma_{1}$ of type-1 packets is by definition equal to the probability that the service of an arbitrary packet of flow 1 is successful, i.e. meets its deadline. By the PASTA property~\cite{wolff_poisson_1982}, the probability that an arbitrary type-1 packet arrives to the system in state $(L=a,N=b)$ is equal to the steady-state probability that the system is in state $(L=a,N=b)$. The overall service success probability is therefore given by summing, over the entire state space, the probability that the system is in state $(L = a, N = b)$ times the success probability conditioned on the occurrence of this state.
\end{proof}
\smallskip
Finding $\gamma_{1}$ amounts to finding the success probabilities per system state, $\P(\mbox{success}|L = a, N = b)$, and the different state probabilities $\P(L = a, N = b)$. We derive these probabilities in the following two subsections.

\subsection*{Success Probability per System State}

We start by finding the probabilities $\P(\mbox{success}|L = a, N = b)$ that a packet of type-1 arriving to the system when it is in state $(L = a, N = b)$ is served by the router server and completes its service before the expiration of its deadline. The arrival of the packet is considered as time origin. Therefore, the absolute deadline of the packet is equal to its relative deadline.

\smallskip
\subsubsection*{Case 1: Arrival to an Empty System (L = 0)}

A packet arriving to an empty system directly enters service. Its success probability is equal to:
\begin{equation}\label{eq:PsuccessL0}
    \P(success|L  = 0) = (1 - \e^{-\mu d}),
  \end{equation}
which is equal to the probability that the service time it experiences is smaller than its deadline $d$.

\smallskip
\subsubsection*{Case 2: Arrival to a Busy Server (L = 1) or Full System (L = 2) with N = 1 or 3}
Due to the overwrite property of the waiting buffer, an arrival of a type-1 packet to a Busy Server state is identical to an arrival when the system is full and the buffered packet is of type-1 or type-3. Indeed, in all three cases, the newly arriving packet will occupy (or overwrite) the buffer and wait to be serviced. Therefore $\P(\mbox{success}|L = 2, N = 1)$ and $\P(\mbox{success}|L = 2, N = 3)$ are both equal to $\P(\mbox{success}|L = 1)$, given by:


\begin{equation}\label{eq:PsuccessL1}
\begin{split}
\P(success|L = 1) & =  \int\limits_{0}^{d}  \mu\e^{-\mu t} \e^{-\lambda_{1} t} (1+\lambda_{2}t)\e^{-\lambda_{2} t}   \\ &\times(1-\e^{-\mu(d-t)})dt.
\end{split}
\end{equation}

To derive Eq.~\eqref{eq:PsuccessL1}, we first condition on the length of the residual service time \textit{t} of the current packet in the server. Being exponentially distributed, this residual service time has a density equal to $\mu \e^{-\mu t}$. Given that the residual service time is $t$, a buffered packet is successfully serviced if all following conditions are met:
\begin{enumerate}
	\item The server becomes free before the deadline expiration of the buffered packet, i.e., $t\leq d$.
	\item The packet is not overwritten while waiting in the buffer.
	\item The service time of the packet is smaller than the remaining time the packet has until deadline expiration.
\end{enumerate}

Condition 1 is accounted for in the integration region (0 to $d$). Condition 2 is met if and only if, starting from the arrival moment 0 of the buffered packet until the residual service time \textit{t} is completed, no new type-1 packet arrives and at most one arrival of type-2 occurs. Indeed, the first arrival of a type-2 packet will not result in overwriting the buffered packet, since it will be coded with it (resulting in a type-3 packet). Any arrival of type-1 is not tolerated since it overwrites both a type-1 and a type-3 buffered packet. The term $\e^{-\lambda_{1} t}$ gives the probability that no type-1 arrivals occur during t. The term $(1 + \lambda_{2}t)\e^{-\lambda_{2} t}$ gives the probability that at most one type-2 arrival occurs.

Finally, condition 3 is met if the service time experienced by the packet is \textit{at most} equal to the remaining time $(d-t)$ until deadline expiration. The term $(1-\e^{-\mu(d-t)})$ is equal to the probability that the service time is smaller than $d-t$.

\smallskip

\subsubsection{Case 3: Arrival to a Full System with Buffered Type-2 Packet (L = 2, N = 2)}

If the type-1 packet arrives to a system where the waiting buffer is occupied by a type-2 packet, both packets will be coded together, resulting in a type-3 buffered packet. The success probability of the arriving packet is therefore equal to the probability that the resulting type-3 packet is served before the deadline $d$ of the arriving packet is expired. It is given by:

\begin{equation}\label{eq:PsuccessL2N2}
\begin{split}
    \P(success| L = 2, N = 2) &
    = \int\limits_{0}^{d}  \mu\e^{-\mu t} \e^{-\lambda_{1} t} \e^{-\lambda_{2} t}
    \\ &\times (1-\e^{-\mu(d-t)})dt.
    \end{split}
  \end{equation}

Eq.~\eqref{eq:PsuccessL2N2} differs from Eq.~\eqref{eq:PsuccessL1} in that no arrival of type-2 is allowed at all, throughout the whole sojourn time of the coded packet in the buffer. Indeed, any type-2 arrival will overwrite the coded packet. This arrival restriction on type-2 packets is reflected in Eq.~\eqref{eq:PsuccessL2N2} through the term $\e^{-\lambda_{2} t}$.

\subsection*{System State Probabilities}

In order to determine the probabilities $\P(L = a, N = b)$ of finding the system in a particular state $(L = a, N = b)$, we again make use of the notion of cycle time. For additional clarity of the following analysis, the reader is referred to Fig.~\ref{fig:cycle}.

The arrival that causes a state change from $L = 1$ to $L = 2$ (i.e. that starts a clearance period) is referred to as \textit{first arrival}. As a busy period might comprise multiple clearance periods, there can be multiple first arrivals during a busy period. Note that the notion of a type-3 packet as first arrival is abstract in the sense that it does not occur in practice, but is introduced for analysis purposes.

The idle time lasts from the moment the system becomes empty until the arrival of a packet of either type-1 or type-2. Therefore, the mean idle time is equal to $1/(\lambda_{1}+\lambda_{2})$. The mean busy period is the same as that given in Eq.~\eqref{eq: E[BP] in E[T]}, with $\lambda$ equal to the sum of $\lambda_{1}$ and $\lambda_{2}$. Similarly, the expected length of the buffer clearance period, $\E[T]$, is obtained from Eq.~\eqref{eq: E[T]} by setting $\theta$ to 0. The mean cycle time $\E[C]$ is obtained as the sum of the mean idle time and the mean busy period.

As standard argument from renewal theory, the probability that the system is in a certain state is given by the mean time the system spends in this state during a cycle, divided by the mean cycle time. Let \textit{$C_{i}$} be the total time the system has a packet of type \textit{i} in the buffer during a cycle time. In other words, $C_{i}$ is the total amount of time the system is in state $(L = 2, N = i)$ during a cycle.
The system state probabilities are then obtained using:

     \begin{equation}\label{eq:PL0}
    \P(L=0) = \frac{\E[I]}{\E[C]},
    \end{equation}

    \begin{equation}\label{eq:PL2}
    \P(L=2,N=i) = \frac{\E[C_{i}]}{\E[C]}, \text{and}
    \end{equation}

    \begin{equation}\label{eq:PL1}
    \P(L=1) = 1 - P(L=0) - \sum_{i=1}^3 P(L=2,N=i).
    \end{equation}

To find $\E[C_{i}]$, we define \textit{$T_{i,j}$} as the cumulative amount of time that the buffer is occupied by type-i packets during a clearance period, given that the \textit{first arrival} of that clearance period was a type-j packet. The expected value of \textit{$C_{i}$} is given by simply summing up the two different cases of packet types for the first arrival:
    \begin{eqnarray}\label{eq:ECI}
    \E[C_{i}] & = & \sum_{j=1}^2 \frac{\lambda_{j}}{\lambda + \mu}(\E[T_{i,j}] + \E[C_{i}])\nonumber\\
    & = & \sum_{j=1}^2 \frac{\lambda_{j}}{\mu}(\E[T_{i,j}]).
    \end{eqnarray}

It now finally remains to find the expected values of the different $T_{i,j}$. The expected value of $T_{1,1}$ is given by:
\begin{equation}\label{eq:ET11}
\begin{split}
\E[T_{1,1}] & = \int\limits_{0}^{d}  (\lambda+\mu)\e^{-(\lambda+\mu)t}\left[\frac{\mu}{\lambda+\mu} t + \frac{  \lambda_{1}}{\lambda+\mu}\left(E[T_{1,1}]+t\right)\right.\\
  & \left. + \frac{\lambda_{2}}{\lambda+\mu}(E[T_{1,3}]+t) \right] dt + d\e^{-(\lambda+\mu)d}.
\end{split}
\end{equation}
Eq.~\eqref{eq:ET11} conveys the following: After the \emph{first arrival} an event is bound to occur: either a service completion, or a new arrival. We condition on $t$, the time at which the next event occurs. Because of the memorylessness of the system, $t$ has density $(\lambda+\mu) \e^{-(\lambda+\mu)t}$. With probability $\frac{\mu}{\lambda+\mu}$, this event is a service completion, and the clearance period is equal to $t$. With probability $\frac{\lambda_{2}}{\lambda+\mu}$, the first event is a type-2 arrival. Due to memorylessness, this can be seen as the beginning of a new clearance period with a type-3 arrival as \emph{first arrival}, i.e., with expectation $\E[T_{1,3}]$. However, up to time $t$, there was a type-1 packet in the buffer, so the expected \emph{cumulative} amount of time type-1 packets spend in the buffer is given by $t+\E[T_{1,3}]$. With probability $\frac{\lambda_1}{\lambda+\mu}$, the first event is a type-1 arrival. Likewise, this can be seen as a new clearance period with a type-1 arrival as a \emph{first arrival}, i.e., with expectation $\E[T_{1,1}]$ and cumulative amount of time equal to $t+\E[T_{1,1}]$. Finally, with probability $\e^{-(\lambda+\mu)d}$, no event happens prior to the deadline expiration and subsequent removal of the type-1 first arrival. In that case, the buffer would have been occupied by this type-1 packet for an amount of time equal to $d$.

Likewise, the expected values of $T_{1,2}$ and that of $T_{1,3}$ are respectively given by:

    \begin{equation} \label{eq:ET12}
    \begin{split}
    \E[T_{1,2}] =  \int\limits_{0}^{d}  & (\lambda+\mu)\e^{-(\lambda+\mu)t} \\
     & \left[\frac{\lambda_{1}}{\lambda+\mu}E[T_{1,3}]  +  \frac{\lambda_{2}}{\lambda+\mu}E[T_{1,2}]\right] dt,
    \end{split}
    \end{equation}

    \begin{equation}\label{eq:ET13}
    \begin{split}
    \E[T_{1,3}] = \int\limits_{0}^{d} & (\lambda+\mu)\e^{-(\lambda+\mu)t} \\
    & \left[\frac{\lambda_{1}}{\lambda+\mu}E[T_{1,1}] + \frac{\lambda_{2}}{\lambda+\mu}E[T_{1,2}]\right] dt.
    \end{split}
    \end{equation}

   Let $\alpha = 1 - \e^{-(\lambda + \mu)d}$. The upper set of integrals results in the following set of three equations with three unknowns, which can be solved to find $\E[T_{1,1}]$, $\E[T_{1,2}]$ and $\E[T_{1,3}]$:

    \begin{equation} \label{eq:ET11set}
    E[T_{1,1}] = \alpha\left[\frac{1}{\lambda + \mu} + \frac{\lambda_{1}}{\lambda + \mu} E[T_{1,1}] +  \frac{\lambda_{2}}{\lambda + \mu} E[T_{1,3}]\right]
    \end{equation}

    \begin{equation}\label{eq:ET12set}
    E[T_{1,2}] = \alpha\left[\frac{\lambda_{1}}{\lambda + \mu} E[T_{1,3}] +  \frac{\lambda_{2}}{\lambda + \mu} E[T_{1,2}] \right]
    \end{equation}

    \begin{equation}\label{eq:ET13set}
    E[T_{1,3}] = \alpha\left[\frac{\lambda_{1}}{\lambda + \mu}E[T_{1,1}] +  \frac{\lambda_{2}}{\lambda + \mu} E[T_{1,2}] \right]
    \end{equation}

    Following a similar reasoning,

    \begin{equation} \label{eq:ET21set}
    E[T_{2,1}] = \alpha\left[\frac{\lambda_{1}}{\lambda + \mu} E[T_{2,1}] +  \frac{\lambda_{2}}{\lambda + \mu} E[T_{2,3}]\right]
    \end{equation}

    \begin{equation} \label{eq:ET22set}
    E[T_{2,2}] = \alpha\left[\frac{1}{\lambda + \mu} + \frac{\lambda_{1}}{\lambda + \mu} E[T_{2,3}] +  \frac{\lambda_{2}}{\lambda + \mu} E[T_{2,2}] \right]
    \end{equation}

    \begin{equation} \label{eq:ET23set}
    E[T_{2,3}] = \alpha\left[\frac{\lambda_{1}}{\lambda + \mu}E[T_{2,1}] +  \frac{\lambda_{2}}{\lambda + \mu} E[T_{2,2}] \right]
    \end{equation}

 and

     \begin{equation} \label{eq:ET31set}
    E[T_{3,1}] = \alpha\left[\frac{\lambda_{1}}{\lambda + \mu} E[T_{3,1}] +  \frac{\lambda_{2}}{\lambda + \mu} E[T_{3,3}]\right]
    \end{equation}

    \begin{equation}\label{eq:ET32set}
    E[T_{3,2}] = \alpha\left[\frac{\lambda_{1}}{\lambda + \mu} E[T_{3,3}] +  \frac{\lambda_{2}}{\lambda + \mu} E[T_{3,2}] \right]
    \end{equation}

    \begin{equation} \label{eq:ET33set}
    E[T_{3,3}] = \alpha\left[\frac{1}{\lambda + \mu} + \frac{\lambda_{1}}{\lambda + \mu}E[T_{3,1}] +  \frac{\lambda_{2}}{\lambda + \mu} E[T_{3,2}] \right]
    \end{equation}

Once the different $\E[T_{i,j}]$ are found, they are used in Eq.~\eqref{eq:PL2} to find the state probabilities $\P(L=2,N=i)$. Replacing the result of Eqs.~\eqref{eq:PL0},~\eqref{eq:PL2}, and~\eqref{eq:PL1} along with the different conditional success probabilities (Eqs.~\eqref{eq:PsuccessL0}-\eqref{eq:PsuccessL2N2}) into Eq.~\eqref{eq:LawTotalProbability} yields the type-1 goodput of the system, which is a function of $\lambda_{1}$, $\lambda_{2}$, $\mu$ and d. 

\begin{figure}[t]
\begin{center}
  \includegraphics[scale=0.71]{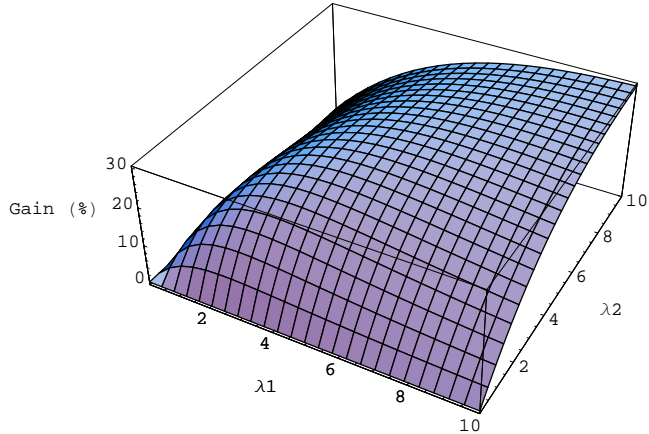}
\end{center}
  \caption{Goodput gain (\%), function of $\lambda_{1}$ and $\lambda_{2}$ when d = 1.}
  \label{fig:gainFunctionLambda1Lambda2}
\end{figure}

\begin{figure}[t]
\begin{center}
  \includegraphics[scale=0.6]{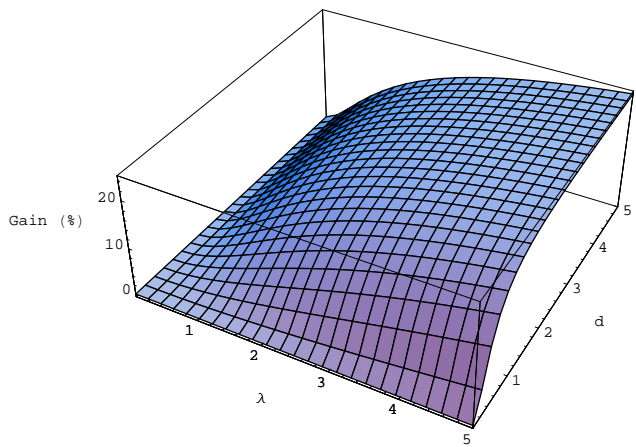}
\end{center}
  \caption{Goodput gain (\%) for values of $\lambda$ between 0 and 5 and values of $d$ between 0 and 5.}
  \label{fig:gainLambda0to5deadline0to5}
\end{figure}

\begin{figure}[t]
\begin{center}
  \includegraphics[scale=0.6]{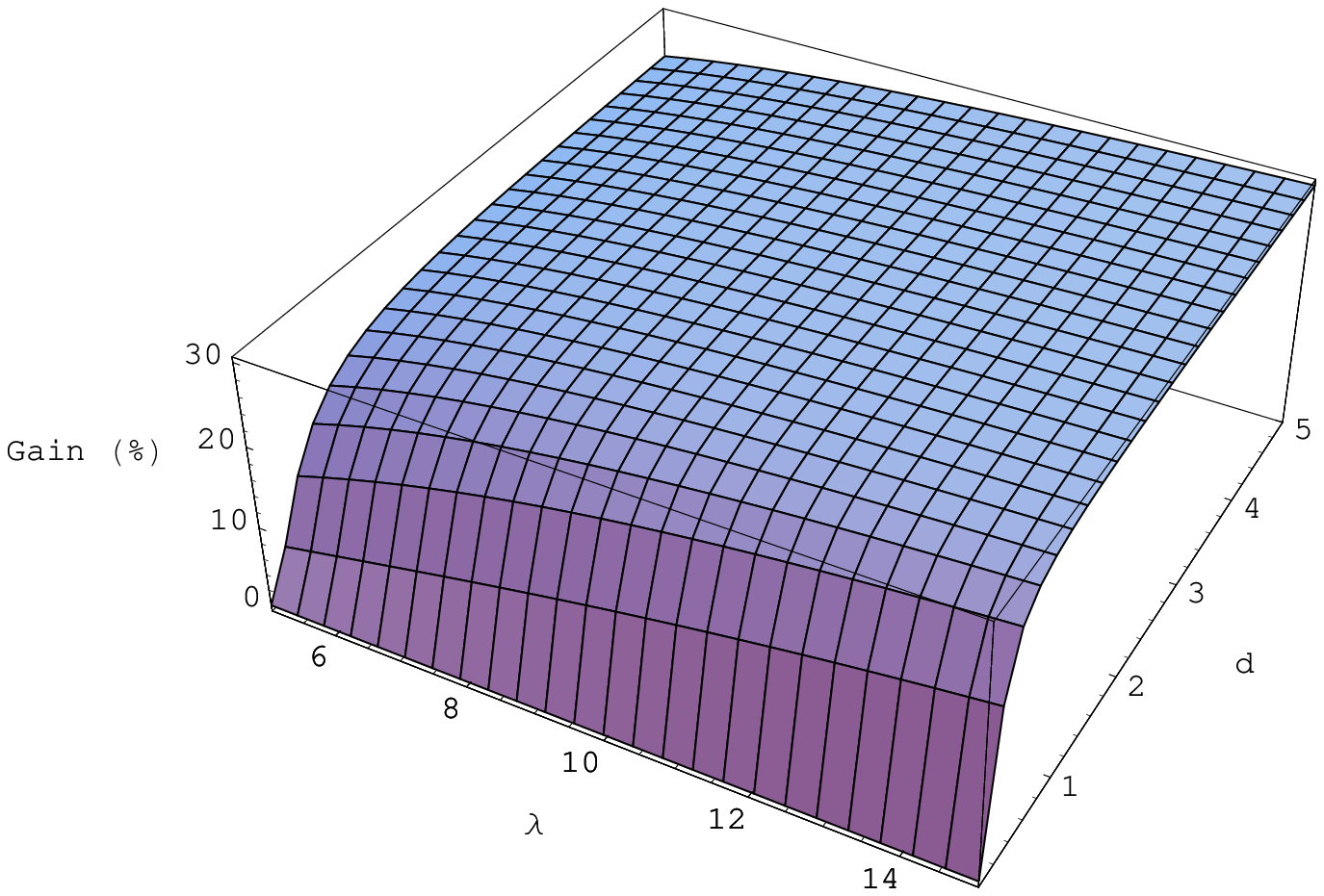}
\end{center}
  \caption{Goodput gain (\%) for values of $\lambda$ between 5 and 15.}
  \label{fig:gainLambda5to15deadline0to5}
\end{figure}

\subsection{Numerical Results} \label{section:numerical-results-coding}
In this section, we study numerically the goodput gain of network coding relative to the no-coding base case. We are particularly interested in finding how much the goodput of real-time packets flows can be increased by applying network coding. We define $\gamma_{base}$ as the total goodput of the router without coding, and $\gamma_{coding}$ as the total goodput of the router with coding. An important remark to be made here is that $\gamma_{base}$ can be found using Eq.~\eqref{eq:LawTotalProbability}, by considering the existence of a single arriving flow instead of two, with parameter $\lambda$ = $\lambda_{1}$ + $\lambda_{2}$. In other words,
 \begin{equation} \label{eq:base case}
  \gamma_{base}\left\{\lambda_{1},\lambda_{2}, \mu, d\right\} = \gamma_{1}\left\{\lambda_{1} + \lambda_{2},0, \mu, d\right\}.
    \end{equation}
We define the goodput gain as the relative increase in goodput, i.e., as $(\gamma_{coding}-\gamma_{base})/\gamma_{base}$.

In the sequel, we fix $\mu$ = 1. Fig. \ref{fig:gainFunctionLambda1Lambda2} provides the goodput gain as a function of the arrival rates $\lambda_{1}$ and $\lambda_{2}$, for $d$ equal to 1. A major conclusion to be drawn here is that the gain provided by network coding is always maximized when $\lambda_{1}$ is equal to $\lambda_{2}$. This result is logical, since equal arrival rates provide the most opportunities for coding.

Fig. \ref{fig:gainLambda0to5deadline0to5} provides the maximal goodput gain for d between 0 and 5, and
$\lambda$ between 0 and 5. Similarly, Fig. \ref{fig:gainLambda5to15deadline0to5} provides the maximal goodput gain for the same range of deadlines and larger values of  $\lambda$, between 5 and 15.

For a fixed arrival rate $\lambda$, the gain increases for an increasing deadline value. A larger deadline provides a higher probability of success for coded packets, hence a positive return on applying network coding. Similarly, for a fixed deadline $d$, the gain increases for an increasing arrival rate; This can be explained by the more efficient buffer usage that network coding provides; In the base case, every arrival overwrites the buffer. The higher the arrival rate, the more frequent such overwritings occur. On the other hand, with network coding, an arrival of type-1 when the buffer is occupied by a type-2 packet, or vice-versa, does not result in an overwriting, but in a coding operation which maintains both packets.
As shown in Fig. \ref{fig:gainLambda0to5deadline0to5}, the gain provided by network coding remains limited in the operating region ($\lambda$ $<$ 2, $d$ $<$ 1), where low arrival rates limit the opportunities of coding, and the short deadlines causes most of the coded packets to miss their timeliness requirement anyway.
Finally, as conveyed in Fig. \ref{fig:gainLambda5to15deadline0to5}, the gain of network coding for high arrival rates and large deadlines can reach up to 30\%, a substantial improvement compared to the base case scenario. This result is important for highlighting the applicability of the proposed approach in more general networks. When the same network coding algorithm was employed for the transmission of real-time packets in a wireless multihop sensor network, the simulation results also showed a maximum performance of 30\% for higher values of the traffic load and higher deadlines~\cite{aoun-ewsn2011}.

A reasoning along the lines of buffer usage efficiency also explains the difference in performance between packet skipping and network coding when the deadline value is increased. As already shown in Fig.~\ref{fig: maximal gain up to 5} and Fig.~\ref{fig: maximal gain up to 10}, the gain of packet skipping is limited to the smallest range of deadlines ($d<1$). In contrast, network coding provides increasing gain for an increasing deadline value, up to reaching a saturation value of 30\%. For higher deadlines, lead-time thresholding basically allows any buffered packet to be serviced, thus providing no improvement in buffer usage. On the other hand, the efficient usage of the buffer is maintained in the case of network coding, since coding operations are deadline-oblivious.

\section{Joint Coding-Skipping Model} \label{section:joint}

\subsection{Goodput Analysis} \label{section:analysis-network-joint}

The following section studies the performance benefits obtained when network coding and packet skipping are concurrently applied. With packet skipping enabled in a network coding setting, instead of removing a waiting packet only once its absolute deadline expires, lead-time thresholding with threshold $\theta$ is applied; if the waiting packet is a native one (a type-1 or type-2 packet), its lead-time is used in the thresholding procedure. If the waiting packet is a coded packet, the largest among the lead-times of the two underlying native packets is used.

The goodput analysis for the joint coding-skipping approach is essentially identical to the analysis of the network coding goodput in Section~\ref{section:performanceAnalysisNetworkCoding}. A number of equations are affected by the introduction of $\theta$; $E[T]$ takes exactly the expression provided in Eq.~\eqref{eq: E[T]}, in comparison to the simple network coding case where $\theta$ was set to 0. The threshold $\theta$ is also introduced in the equations of $\P(success|L = 1)$ and $\P(success| L = 2, N = 2)$, resulting in:

\begin{equation}\label{eq:PsuccessL1New}
\begin{split}
\P(success|L = 1) &= \int\limits_{0}^{d-\theta}  \mu\e^{-\mu t} \e^{-\lambda_{1} t} (1+\lambda_{2}t)\e^{-\lambda_{2} t}   \\ &\times(1-\e^{-\mu(d-t)})dt.
\end{split}
\end{equation}

\begin{equation}\label{eq:PsuccessL2N2New}
\begin{split}
    \P(success| L = 2, N = 2) &= \int\limits_{0}^{d-\theta}  \mu\e^{-\mu t} \e^{-\lambda_{1} t} \e^{-\lambda_{2} t} \\
     & \times (1-\e^{-\mu(d-t)})dt.
\end{split}
  \end{equation}

Note that the equations giving $\P(success|L = 2, N =1)$ and $\P(success|L = 2, N =3)$ remain identical to Eq.~\eqref{eq:PsuccessL1New}.

Finally, the threshold $\theta$ should be taken into account in all $E[T_{i,j}]$ expressions (Eqs.~\eqref{eq:ET11set}-\eqref{eq:ET33set}), by setting $\alpha$ to $1 - \e^{-(\lambda + \mu)(d-\theta)}$ instead of $1 - \e^{-(\lambda + \mu)d}$.

\begin{figure}[t]
\begin{center}
  \includegraphics[scale=0.64]{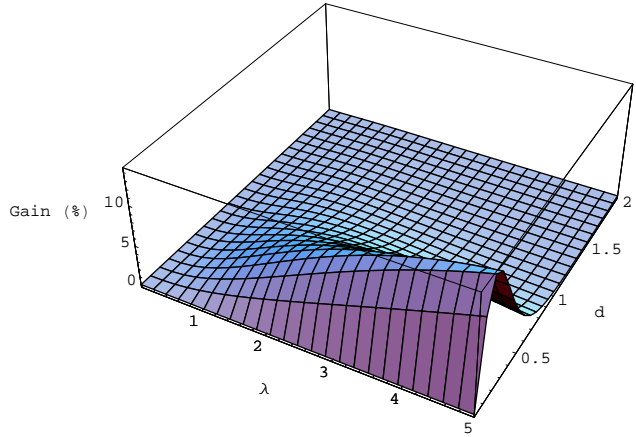}
\end{center}
  \caption{Additional increase in goodput gain (\%), for $\lambda$ between 0 and 5, achieved when adding skipping to network coding.}
  \label{fig:comparisonRngandRngImpactInCombinedLambdaEqual5}
\end{figure}

\begin{figure}[t]
\begin{center}
  \includegraphics[scale=0.64]{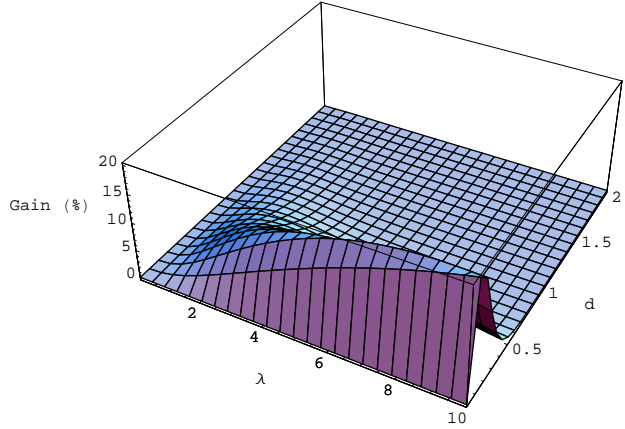}
\end{center}
  \caption{Additional Increase in Goodput gain (\%), for $\lambda$ up to 10, achieved when adding skipping to network coding.}
  \label{fig:comparisonRngandRngImpactInCombinedLambdaEqual10}
\end{figure}

\begin{figure}[t]
\begin{center}
  \includegraphics[scale=0.64]{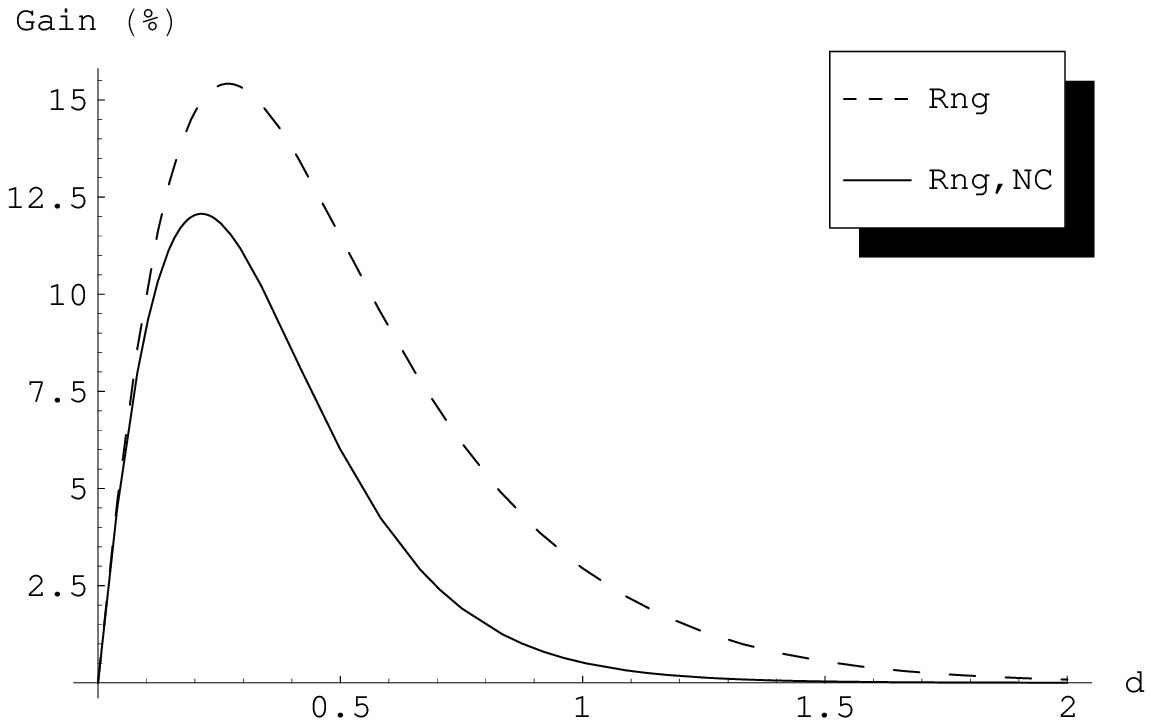}
\end{center}
  \caption{Comparison, for $\lambda = 4$, between gain of skipping and the additional gain it provides when enabled concurrently with network coding.}
  \label{fig:RngVsRngInJointforLambda4}
\end{figure}

\subsection{Numerical Results} \label{section:numerical-results-joint}
Fig.~\ref{fig:comparisonRngandRngImpactInCombinedLambdaEqual5} and Fig.~\ref{fig:comparisonRngandRngImpactInCombinedLambdaEqual10} plot the \textit{additional} gain of combining service skipping with network coding, relative to the gains enjoyed from network coding alone. The impact of packet skipping in a joint coding-skipping approach remains limited to the smallest range of deadlines, in accordance with the results for which only packet skipping was employed. However, we observe that the range where the performance of the joint system is noticeable, is smaller than the one of the packet skipping case. A snapshot of this difference can also be seen in Fig.~\ref{fig:RngVsRngInJointforLambda4}. This behavior is consistent with our simulation results in~\cite{aoun-ewsn2011} where we observed that the additional gain from packet skipping was reaching at most 10-12\% and for the lower regime of deadline values. This behavior puzzled us both in our analysis here and also in our simulations. But careful analysis reveals that when network coding is employed, our system naturally removes opportunities for skipping since many native packets, once coded, are transmitted even if they would have been normally dropped with thresholding. This situation is even more true for higher arrival rates since more packets are coded and they are replaced from the buffer faster. Therefore, the opportunities that occur for efficient use of packet skipping become less as the arrival rate is increased.

%

\section{Conclusions}
\label{section:conclusions}
In this paper, we developed an analytical model that captures the impact of packet skipping and network coding on the delivery of real-time packets. First, we developed a model that considers only packet skipping for packets that have associated deadlines. We showed through the analytical model, that an optimal thresholding policy, when applied to the incoming packets at the router, can lead to goodput gains of up to 30\%.
Subsequently we developed a model for the case where the router employs algebraic coding on the incoming packets. We showed that depending on the operating region, the gain of network coding can reach up to 30\%. The ultimate objective of the paper was a router queueing system model that considers jointly network coding and packet skipping in the presence of per-packet deadlines. We provided exact expressions for the stationary goodput of the system and numerical results of the achievable goodput gain. Employing packet skipping in addition to network coding adds up to 15\% gain over the collected gains obtained from employing network coding exclusively.

In addition to the novel approach that looks at network coding and packet skipping from a real-time perspective, our work intends to be a stepping stone for future research. Among the topics we foresee as interesting are the qualitative and quantitative studies of network coding and packet skipping in more complex network topologies. Furthermore, we will seek ways to translate some of the results and conclusions of this work to heuristic optimization algorithms for real-life wireless networks.

\end{document}